\definecolor{blue}{rgb}{0.1,0.2,0.5}
\definecolor{brown}{rgb}{0.6,0.6,0.2}
\newcommand{\Oh}{\mathcal{O}}
\newcommand{\Ohtilde}{\tilde{\Oh}}
\newcommand{\Cc}{\mathcal{C}}
\newcommand{\Dd}{\mathcal{D}}
\newcommand{\tw}{\mathrm{tw}}
\newcommand{\eps}{\varepsilon}
\newcommand{\N}{\mathbb{N}}
\renewcommand{\phi}{\varphi}
\renewcommand{\epsilon}{\varepsilon}
\renewcommand{\leq}{\leqslant}
\renewcommand{\geq}{\geqslant}
\newcommand{\FVS}{\textsc{Feedback Vertex Set}\xspace}
\newcommand{\MIF}{\textsc{Max Induced Forest}\xspace}
\title{Subexponential-time algorithms for finding large induced sparse subgraphs} 
\author{Jana Novotn\'a}{Department of Applied Mathematics, Faculty of Mathematics and Physics, Charles University, Prague, Czech Republic}{janca@kam.mff.cuni.cz}{}{Supported by student grants GAUK 1277018, SVV-2017-260452.}
\author{Karolina Okrasa}{Faculty of Mathematics and Information Science, Warsaw University of Technology, Poland}{k.okrasa@mini.pw.edu.pl}{}{}
\author{Micha\l{} Pilipczuk}{Institute of Informatics, Faculty of Mathematics, Informatics and Mechanics, University of Warsaw, Poland}{michal.pilipczuk@mimuw.edu.pl}{}{This work is 
a part of project TOTAL that has received funding from the European Research Council (ERC) 
under the European Union's Horizon 2020 research and innovation programme (grant agreement No.~677651).}
\author{Pawe\l{} Rz\k{a}\.zewski}{Faculty of Mathematics and Information Science, Warsaw University of Technology, Poland}{p.rzazewski@mini.pw.edu.pl}{0000-0001-7696-3848}{Partially supported by Polish National Science Centre grant no.\ 2018/31/D/ST6/00062.}
\author{Erik Jan van Leeuwen}{Department of Information and Computing Sciences, Utrecht University, The Netherlands}{e.j.vanleeuwen@uu.nl}{}{}
\author{Bartosz Walczak}{Department of Theoretical Computer Science, Faculty of Mathematics and Computer Science, Jagiellonian University, Kraków, Poland}{walczak@tcs.uj.edu.pl}{}{Partially supported by Polish National Science Centre grant no.\ 2015/17/B/ST6/01873.}
\authorrunning{J. Novotn\'a, K. Okrasa, Mi.\ Pilipczuk, P. Rz\k{a}\.zewski, E. J. van Leeuwen, and B. Walczak}
\keywords{subexponential algorithm, feedback vertex set, $P_t$-free graphs, string graphs}
\begin{document}

\maketitle

\vskip -0.5cm
\begin{picture}(0,0)
\put(402,-300)
{\hbox{\includegraphics[width=40px]{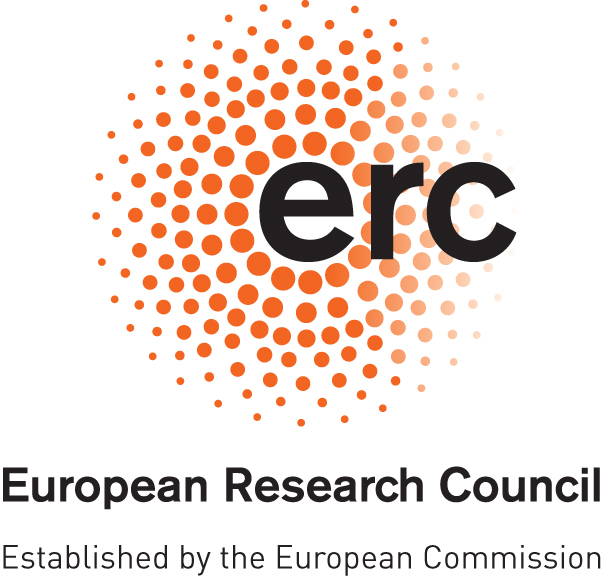}}}
\put(392,-360)
{\hbox{\includegraphics[width=60px]{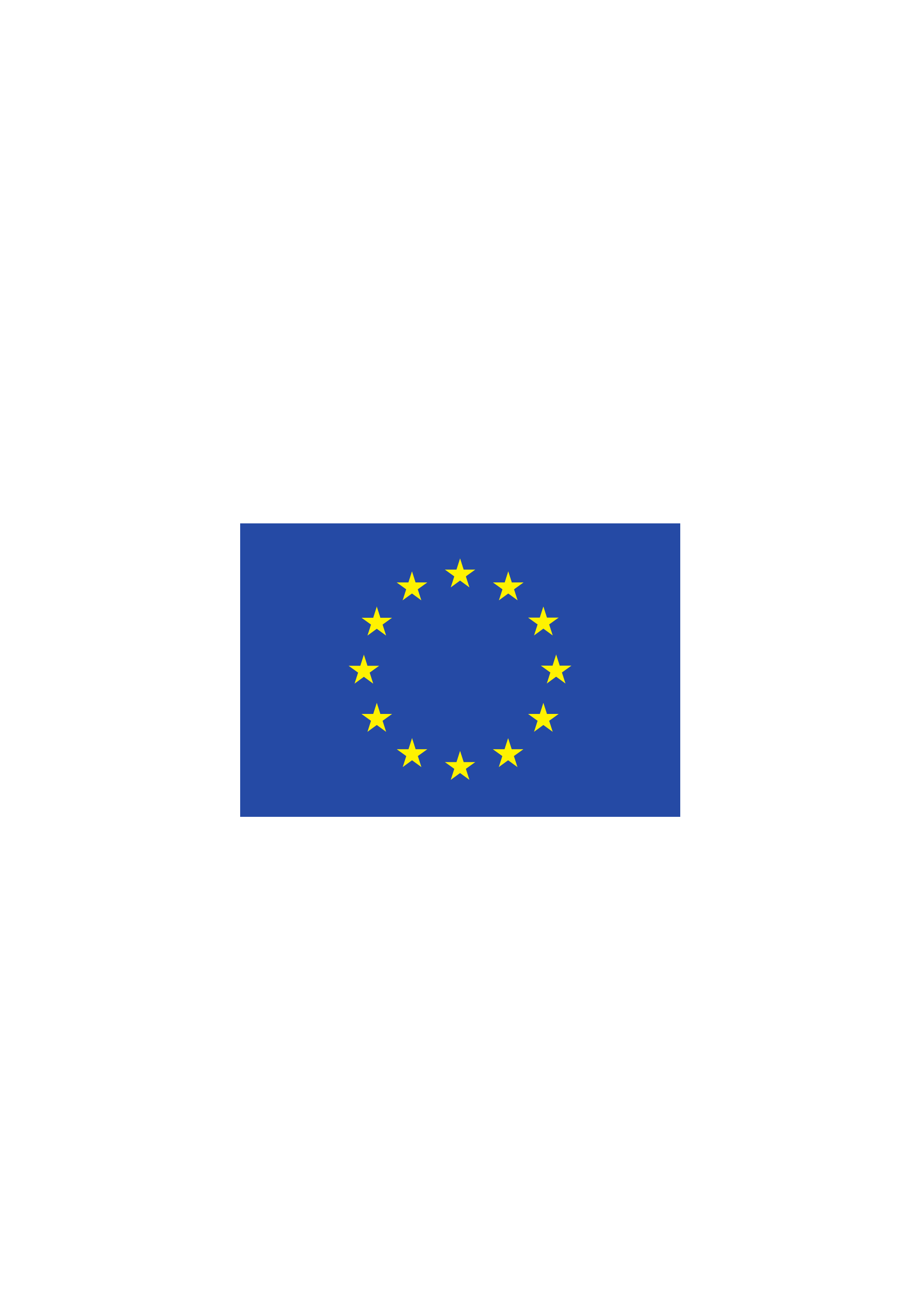}}}
\end{picture}

\begin{abstract}
Let $\mathcal{C}$ and $\mathcal{D}$ be hereditary graph classes.
Consider the following problem: given a graph $G\in\mathcal{D}$, find a largest, in terms of the number of vertices, induced subgraph of $G$ that belongs to $\mathcal{C}$.
We prove that it can be solved in $2^{o(n)}$ time, where $n$ is the number of vertices of $G$, if the following conditions are satisfied:
\begin{compactitem}
\item the graphs in $\mathcal{C}$ are sparse, i.e., they have linearly many edges in terms of the number of vertices;
\item the graphs in $\mathcal{D}$ admit balanced separators of size governed by their density, e.g., $\mathcal{O}(\Delta)$ or $\mathcal{O}(\sqrt{m})$, where $\Delta$ and $m$ denote the maximum degree and the number of edges, respectively; and
\item the considered problem admits a single-exponential fixed-parameter algorithm when parameterized by the treewidth of the input graph.
\end{compactitem}
This leads, for example, to the following corollaries for specific classes $\mathcal{C}$ and $\mathcal{D}$:
\begin{compactitem}
\item a largest induced forest in a $P_t$-free graph can be found in $2^{\tilde{\mathcal{O}}(n^{2/3})}$ time, for every fixed $t$; and
\item a largest induced planar graph in a string graph can be found in $2^{\tilde{\mathcal{O}}(n^{3/4})}$ time.
\end{compactitem}
\end{abstract}

\section{Introduction}\label{sec:intro}

Many optimization problems in graphs can be expressed as follows: given a graph $G$, find a largest vertex set $A$ such that $G[A]$, the subgraph of $G$ induced by $A$, satisfies some~property.
Examples include {\sc{Independent Set}} (the property of being edgeless), {\sc{Feedback Vertex Set}} (the property of being acyclic), and {\sc{Planarization}} (the property of being planar). 
Here, {\sc{Feedback Vertex Set}} and {\sc{Planarization}} are customarily phrased in the complementary form that asks for minimizing the complement of $A$: 
given $G$, find a smallest vertex set $X$ such that $G-X$ has the desired property. 
While all problems considered in this paper can be viewed in these two ways, for the sake of clarity we focus on the maximization formulation.

Formally, we shall consider the following {\sc{Max Induced $\Cc$-Subgraph}} problem.
Fix a graph class $\Cc$ that is {\em{hereditary}}, that is, closed under taking induced subgraphs. Then, given a graph $G$, the goal is to find a largest vertex subset
$A$ such that $G[A]\in \Cc$. Our focus is on exact algorithms for this problem with running time expressed in terms of $n$, the number of vertices of $G$.
Clearly, as long as the graphs from $\Cc$ can be recognized in polynomial time, the problem can be solved in $2^n\cdot n^{\Oh(1)}$ time by brute-force; 
we are interested in non-trivial improvements over this approach.

The complexity of {\sc{Max Induced $\Cc$-Subgraph}} was studied as early as in 1980 by Lewis and Yannakakis~\cite{LewisY80}, who proved that when the graph class $\Cc$ does not contain all graphs, the 
problem is NP-hard. Recently, Komusiewicz~\cite{Komusiewicz18} inspected the reduction of Lewis and Yannakakis and concluded that under the Exponential Time Hypothesis (ETH) one can even exclude the existence of 
{\em{subexponential-time}} algorithms for the problem, that is, ones with running time $2^{o(n)}$.
While the result of Komusiewicz~\cite{Komusiewicz18} excludes significant improvements in the running time, there is still room for improvement in the base of the exponent. 
Indeed, for various classes of graphs $\Cc$, algorithms with running time $\Oh((2-\eps)^n)$ for some $\eps>0$ are known; 
see e.g.~\cite{BliznetsFPV16,FominGLS16,FominTV11,FominTV15,PilipczukP12} and the references therein.

Another direction, which is of main interest to us, is to impose more conditions on the input graphs $G$ in the hope of obtaining faster algorithms for restricted cases.
Formally, we fix another hereditary graph class $\Dd$ and consider {\sc{Max Induced $\Cc$-Subgraph}} where the input graph $G$ is additionally required to belong to $\Dd$.

In this line of research, the class $\Cc$ of edgeless graphs, which corresponds to the classical {\sc{Max Independent Set}} ({\sc{MIS}}) problem, has been extensively studied.
Suppose $\Dd$ is the class of {\em{$H$-free graphs}}, that is, graphs that exclude some fixed graph $H$ as an induced subgraph.
As observed by Alekseev~\cite{Alekseev82}, the problem is NP-hard on $H$-free graphs unless $H$ is a path or a subdivision of the claw ($K_{1,3}$); 
the reduction of~\cite{Alekseev82} actually excludes the~existence of a subexponential-time algorithm under ETH in these cases.
On the positive side, the maximal classes for which polynomial-time algorithms are known are the $P_6$-free graphs~\cite{GrzesikKPP19} and the fork-free graphs~\cite{LozinM08}.
It would be consistent with our knowledge if {\sc{MIS}} was polynomial-time solvable on $H$-free graphs whenever $H$ is a path or a subdivision of the~claw.

It turns out that if we only aim at subexponential-time instead of polynomial-time algorithms, many more tractability results can be obtained for {\sc{MIS}}, 
and usually they are also much simpler conceptually. Bacs\'o et al.~\cite{BacsoLMPTL19} showed that {\sc{MIS}} can be solved in $2^{\Oh(\sqrt{tn\log n})}$ time on $P_t$-free graphs, for every $t\in \N$.
Very recently, Chudnovsky et al.~\cite{ChudnPPT_2019} reported a $2^{\Oh(\sqrt{n\log n})}$-time algorithm on {\em{long-hole-free graphs}}, 
which are graphs that exclude every cycle of length at least $5$ as an induced subgraph.

In the light of the results above, it is natural to ask whether structural assumptions on the class $\Dd$ from which the input is drawn, like e.g.\ $P_t$-freeness,
can help in the design of subexponential-time algorithms for other maximum induced subgraph problems, beyond $\Cc$ being the class of edgeless graphs.
This is precisely the question we investigate in this work.

\subparagraph*{Our contribution.} 
We identify three properties that together provide a way to solve the {\sc{Max Induced $\Cc$-Subgraph}} problem on graphs from $\Dd$ in subexponential time, where $\Cc$ and $\Dd$ are hereditary graph classes. 
They are as follows:
\begin{itemize}
\item The class $\Cc$ should consist of {\em{sparse}} graphs. To be specific, let us assume that every $n$-vertex graph from $\Cc$ has $\Oh(n)$ edges.
\item The class $\Dd$ may contain dense graphs, but they should admit balanced separators whose size is somehow governed by the density. To be specific, let us assume that every graph from $\Dd$ 
with maximum degree $\Delta$ has a balanced separator of size $\Oh(\Delta)$, or that every graph from $\Dd$ with $m$ edges has a balanced separator of size $\Oh(\sqrt{m})$.
\item The {\sc{Max Induced $\Cc$-Subgraph}} problem on graphs from $\Dd$ can be solved in $2^{\Ohtilde(w)}\cdot\nobreak n^{\Oh(1)}$ time, where $w$ is the treewidth of the input graph.
Here, notation $\Ohtilde(\cdot)$ hides polylogarithmic factors.
\end{itemize}
We show that if these conditions are simultaneously satisfied, then the {\sc{Max Induced $\Cc$-Subgraph}} problem on graphs from $\Dd$ can be solved in $2^{\Ohtilde(n^{2/3})}$ time in the presence of balanced separators
of size $\Oh(\Delta)$ and in $2^{\Ohtilde(n^{3/4})}$ time for balanced separators of size $\Oh(\sqrt{m})$. The precise statement and proof of this result can be found in \cref{sec:main}.

The conditions on $\Cc$ look natural and are satisfied by various specific classes of interest, like forests (corresponding to {\sc{Feedback Vertex Set}}) and planar graphs (corresponding to {\sc{Planarization}}).
On the other hand, the condition on $\Dd$ looks more puzzling. However, there are certain non-sparse classes of graphs where the existence of such balanced separators has been established.
For instance, balanced separators of size $\Oh(\Delta)$ are known to exist in $P_t$-free graphs for any fixed $t\in \N$~\cite{BacsoLMPTL19}, and in long-hole-free graphs~\cite{ChudnPPT_2019}.
The existence of balanced separators of size $\Oh(\sqrt{m})$ is known for {\em{string graphs}}, which are intersection graphs of arc-connected subsets of the plane, 
and more generally for intersection graphs of connected subgraphs in any proper minor-closed class~\cite{DBLP:conf/innovations/Lee17}. 
All these observations yield a number of concrete corollaries to our main result, which are gathered in \cref{sec:cors}.
In \cref{sec:lbs}, we discuss some lower bounds: we show that if $\Cc$ is the class of forests (corresponding to the {\sc{Feedback Vertex Set}} problem) and $\Dd$ is characterized by a single excluded induced subgraph,
then under the Exponential Time Hypothesis one cannot hope for subexponential-time algorithms in greater generality than provided by our main result.

\section{Main result}\label{sec:main}

We use standard graph notation. We assume the reader's familiarity with treewidth. We recall some notation for tree decompositions in Section~\ref{sec:deg-dp}, where it is actually needed.

For a graph $G$, a set $S\subseteq V(G)$ is a \emph{balanced separator} if every connected component of $G-S$ has at most $\frac{2}{3}|V(G)|$ vertices.
It is known that small balanced separators can be used to construct tree decompositions of small width, as made explicit in the following lemma.

\begin{lemma}[\cite{DVORAK2019137}]\label{lem:sn}
If every subgraph of a graph\/ $G$ has a balanced separator of size at most\/ $k$, then the treewidth of\/ $G$ is\/ $\Oh(k)$.
\end{lemma}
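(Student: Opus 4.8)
The plan is to prove \cref{lem:sn} by the standard recursive decomposition argument: build a tree decomposition of $G$ top-down by repeatedly splitting the current piece with a balanced separator, and bound the bag size by tracking how much ``boundary'' can accumulate. Concretely, I would prove by induction the following slightly stronger statement: for every subgraph $H$ of $G$ and every set $W\subseteq V(H)$ with $|W|\le 2k$, the graph $H$ has a tree decomposition of width at most $3k-1$ in which some bag contains all of $W$. Here $W$ plays the role of the previously-created separator vertices that must be remembered.

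First I would set up the induction on $|V(H)|$. If $|V(H)|\le 3k$ we simply take the single bag $V(H)$, which contains $W$ and has size at most $3k$. Otherwise, apply the hypothesis of the lemma to the subgraph $H$ to obtain a balanced separator $S\subseteq V(H)$ with $|S|\le k$, so that every component of $H-S$ has at most $\frac23|V(H)|$ vertices. Let $C_1,\dots,C_r$ be the connected components of $H-S$, and group them into the pieces induced by $W\cup S$: for each $i$, consider the subgraph $H_i\coloneqq H[C_i\cup W\cup S]$ together with the set $W_i\coloneqq (W\cap C_i)\cup S$. Then $|W_i|\le |W\cap C_i|+|S|$. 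The trouble is that $|W\cap C_i|$ could be as large as $|W|\le 2k$, which would make $|W_i|$ too big to recurse on; so one must be slightly more careful, splitting $W$ across the components. The cleaner bookkeeping is to let $W_i\coloneqq S$ when recursing --- i.e.\ forget the part of $W$ not in $C_i$ --- but then the root bag of the recursive decomposition only needs to contain $S$, and we attach it under a bag containing $W\cup S$; for this attachment to be valid (width $3k-1$) we need $|W\cup S|\le 3k$, which holds since $|W|\le 2k$ and $|S|\le k$.

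So the construction is: create a root bag $B_0\coloneqq W\cup S$ of size at most $3k$. For each component $C_i$, recursively obtain (using the inductive hypothesis applied to $H_i\coloneqq H[C_i\cup S]$ with the set $S$, noting $|S|\le k\le 2k$ and $|V(H_i)|\le \frac23|V(H)|+k<|V(H)|$) a tree decomposition $\mathcal{T}_i$ of $H_i$ of width at most $3k-1$ with a bag $B_i\supseteq S$; then make $B_i$ a child of $B_0$. The resulting tree decomposition has all bags of size at most $3k$, hence width at most $3k-1=\Oh(k)$. I would then verify the two tree-decomposition axioms: every vertex and every edge of $H$ is covered (edges inside some $C_i\cup S$ are covered by $\mathcal{T}_i$, edges incident to $W\setminus S$ are covered --- here one must check $W$ induces no edges outside $B_0$; actually since in the application from $H=G$ we start with $W=\emptyset$ this degenerate issue disappears, and for the inductive step $W=S$ lies in every $B_i$), and the connectivity (``every vertex's bags form a subtree'') condition, which holds because vertices of $S$ appear only in $B_0$ and within the subtrees $\mathcal{T}_i$ where they sit in the root bag $B_i$ adjacent to $B_0$, while vertices in a single $C_i$ appear only within $\mathcal{T}_i$.

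The main obstacle is precisely the bookkeeping of which previously-created separator vertices need to be remembered in a bag: one must choose the invariant (the size bound on the carried set $W$, here $2k$, and the resulting width bound $3k-1$) so that (i) the recursion terminates --- each child piece $H[C_i\cup S]$ is strictly smaller because $S$ is a \emph{balanced} separator, so $|C_i|\le \frac23|V(H)|$ --- and (ii) the union $W\cup S$ never exceeds the target bag size. Everything else --- verifying the decomposition axioms and collapsing the recursion depth, which is $\Oh(\log|V(G)|)$ but irrelevant since we only bound the width --- is routine. Since \cref{lem:sn} is quoted from \cite{DVORAK2019137}, I would present this as a self-contained sketch and refer to that paper for the detailed verification.
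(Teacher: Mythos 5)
There is a genuine gap, and it sits exactly at the point you yourself flag as ``the main obstacle''. The separator $S$ supplied by the hypothesis is balanced with respect to $|V(H)|$, not with respect to the carried interface $W$; when $|V(H)|$ is much larger than $k$, nothing prevents all of $W$ from lying in a single component $C_i$ of $H-S$. Hence the honest interface for the recursive call on $H[C_i\cup S]$, namely $(W\cap C_i)\cup S$, can have size $|W|+k$, so the invariant $|W|\le 2k$ cannot be maintained and the bag size grows by $k$ per recursion level; since the depth is $\Theta(\log n)$, this route only yields $\tw(G)=\Oh(k\log n)$, not $\Oh(k)$. Your proposed repair --- recursing on $H[C_i\cup S]$ with interface $W_i=S$ and ``forgetting'' $W\cap C_i$ --- does not produce a tree decomposition at all: a vertex $w\in W\cap C_i$ then appears in the root bag $B_0=W\cup S$ and again in whatever bags of $\mathcal{T}_i$ happen to contain it, but it need not appear in the root bag $B_i\supseteq S$ of $\mathcal{T}_i$, so the nodes whose bags contain $w$ do not induce a connected subtree, violating condition \ref{p:amoeba}. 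Your remark that $W=\emptyset$ at the top level only postpones this: already one level down the interface is $S$, the next separator $S'$ need not split $S$, and the same violation occurs for the vertices of $S$ that lie inside a component of $H_i-S'$.

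The recursive scheme you describe is the textbook argument, but it needs separators balanced \emph{with respect to a prescribed vertex set} $W$ (every component containing at most a constant fraction of $W$); with that hypothesis the bookkeeping goes through verbatim and gives width $\Oh(k)$ (see, e.g., \cite{platypus}). Deriving such $W$-balanced separators --- or an $\Oh(k)$ treewidth bound by any other route --- from the stated assumption that every subgraph has a separator balanced merely with respect to its own vertex count is precisely the nontrivial content of the cited theorem of Dvo\v{r}\'ak and Norin \cite{DVORAK2019137}, whose proof is not the naive recursive splitting. (Incidentally, the easy $\Oh(k\log n)$ bound that your recursion does deliver would still suffice for \cref{thm:main}, since $\Ohtilde$ absorbs polylogarithmic factors, but it does not prove \cref{lem:sn} as stated.)
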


Now, we are ready to state and prove our main result.

\begin{theorem} \label{thm:main}
Let\/ $\Cc$ and\/ $\Dd$ be classes of graphs that satisfy the following conditions:
\begin{enumerate}[label=(P\arabic*),ref=(P\arabic*),leftmargin=*]
\item Every\/ $n$-vertex graph from\/ $\Cc$ has\/ $\Oh(n)$ edges. \label{prop:sparse}
\item The class\/ $\Dd$ is closed under taking induced subgraphs. \label{prop:hereditary}
\item Given a graph\/ $G\in \Dd$ with\/ $n$ vertices and treewidth\/ $w$, one can find a largest set\/ $A \subseteq V(G)$ such that\/ $G[A]\in \Cc$ in\/ $2^{\Ohtilde(w)}\cdot n^{\Oh(1)}$ time. \label{prop:ctw}
\end{enumerate}
\noindent Furthermore, let the class\/ $\Dd$ satisfy one of the following conditions:
\begin{enumerate}[label=(P4\alph*),ref=(P4\alph*),widest=a,leftmargin=*]
\item Every graph in\/ $\Dd$ with maximum degree\/ $\Delta$ has a balanced separator of size\/ $\Oh(\Delta)$, or \label{prop:separatorDelta}
\item Every graph in\/ $\Dd$ with\/ $n$ vertices and maximum degree\/ $\Delta$ has a balanced separator of size\/ $\Oh(\sqrt{n\Delta})$. \label{prop:separatorM} 
\end{enumerate}
\noindent Then, given an\/ $n$-vertex graph\/ $G\in \Dd$, one can find a largest set\/ $A\subseteq V(G)$ such that\/ $G[A]\in \Cc$ in time
\begin{enumerate}[label=(\alph*),ref=(\alph*),widest=a,leftmargin=*]
\item $2^{\Ohtilde(n^{2/3})}$, if\/ $\Dd$ satisfies \ref{prop:separatorDelta}, or
\item $2^{\Ohtilde(n^{3/4})}$, if\/ $\Dd$ satisfies \ref{prop:separatorM}.
\end{enumerate}
\end{theorem}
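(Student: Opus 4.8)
The plan is to branch on the structure of an optimal solution, guided by the density dichotomy that the class $\Cc$ is sparse while $\Dd$ may be dense. Fix an optimal set $A^\star$ with $G[A^\star]\in\Cc$, and let $X = V(G)\setminus A^\star$ be its complement. Since $G[A^\star]$ has $\Oh(|A^\star|) = \Oh(n)$ edges by \ref{prop:sparse}, the set $A^\star$ cannot contain many high-degree vertices of $G$: if $D$ is a suitably chosen degree threshold, then $A^\star$ can intersect the set $V_{\geq D}$ of vertices of degree at least $D$ in only $\Oh(n/D)$ vertices. The idea is to guess which vertices of $V_{\geq D}$ land in $A^\star$ — but this guess space is itself too large, so instead I would guess a slightly larger set: choose a size threshold $s$ and branch over all subsets of $V_{\geq D}$ of size at most roughly $n/D$ that we \emph{remove}; there are $2^{\Ohtilde(n/D)}$ of them if $D$ is polynomially large. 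More precisely, first delete from $G$ all but $\Oh(n/D)$ vertices of high degree (those kept in $A^\star$), obtaining a graph of maximum degree roughly $D$ plus a bounded-size set of apex-like high-degree vertices; the latter can be dealt with by a further guessing step or absorbed into the separator. After this reduction we are left with a graph $G'\in\Dd$ (using \ref{prop:hereditary}) of maximum degree $\Oh(D)$, on which we must still solve \textsc{Max Induced $\Cc$-Subgraph} optimally relative to the original instance.

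On $G'$, invoke the separator hypothesis. Under \ref{prop:separatorDelta}, every subgraph of $G'$ has a balanced separator of size $\Oh(D)$, so by \cref{lem:sn} the treewidth of $G'$ is $\Oh(D)$; under \ref{prop:separatorM}, every subgraph has a balanced separator of size $\Oh(\sqrt{nD})$, so the treewidth of $G'$ is $\Oh(\sqrt{nD})$. Now apply the treewidth dynamic program from \ref{prop:ctw}, which runs in $2^{\Ohtilde(w)}\cdot n^{\Oh(1)}$ time. The total running time is the product of the branching factor $2^{\Ohtilde(n/D)}$ and the per-branch cost $2^{\Ohtilde(D)}\cdot n^{\Oh(1)}$ in case \ref{prop:separatorDelta}, or $2^{\Ohtilde(n/D)}\cdot 2^{\Ohtilde(\sqrt{nD})}\cdot n^{\Oh(1)}$ in case \ref{prop:separatorM}. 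Balancing the two exponents gives the claimed bounds: in case (a) set $n/D = D$, i.e.\ $D = \sqrt{n}$, yielding $2^{\Ohtilde(\sqrt{n})}$ — wait, this undershoots; the correct balance accounts for the fact that the reduced instance still has $n$ vertices and we only control its \emph{maximum degree}, so the guessing set has size $\Oh(n/D)$ while treewidth is $\Oh(D)$, and $n/D = D$ gives $D=\sqrt n$. To recover the stated $n^{2/3}$ one instead has to be more careful: the number of high-degree vertices we must guess among is $\Oh(n/D)$ only after we have already paid to \emph{locate} them, and the honest count of subsets of an $n$-element ground set of size $\Oh(n/D)$ is $2^{\Ohtilde(n/D)}$, so balancing $n/D$ against $D$ is right and the exponent is $n^{1/2}$; the $n^{2/3}$ bound arises because one must also handle the $\Oh(n/D)$ retained high-degree vertices, each of which may have up to $n$ neighbours, inflating the separator/treewidth of the contracted instance to $\Oh(D + n/D)$ rather than $\Oh(D)$. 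Balancing $n/D$ against $D + n/D$ is still dominated by $D$, so the genuine source of $n^{2/3}$ must be that the branching is not over subsets of a fixed $\Oh(n/D)$-set but over \emph{all} subsets of $V_{\ge D}$, which has size up to $n/D\cdot$(something); choosing $D$ so that $|V_{\ge D}|\le n/D$ forces $D\ge\sqrt{m/n}$-type bounds. I expect the clean argument fixes $D$ so that both $|V_{\geq D}| \le 2m/D = \Oh(n/D)$ (by a counting bound on the number of high-degree vertices, using $\sum \deg = 2m$ and $m$ possibly as large as $\binom n2$) \emph{and} the treewidth of the low-degree part is $\Oh(D)$; since $m$ can be quadratic, $|V_{\ge D}| = \Oh(n^2/D)$, and balancing $n^2/D$ against $D$ gives $D = n^{2/3}$, hence case (a). For case (b), balance $n^2/D$ against $\sqrt{nD}$: this gives $D = n^{6/5}$? — no; $\sqrt{nD} = n^2/D \Rightarrow D^{3/2} = n^{3/2} \Rightarrow D = n$, again wrong. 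The correct accounting, which I would pin down in the full proof, uses $m = \Oh(n\Delta) = \Oh(nD)$ after the degree reduction, so the separator in case (b) has size $\Oh(\sqrt{nD})$ and one balances the guessing exponent $n^2/D$ (or $m/D$ with $m$ quadratic) against $\sqrt{nD}$, giving $D = n^{6/5}$... I will instead follow the standard template: guess the intersection of $A^\star$ with $V_{\ge D}$ by a recursive/separator-based win-win rather than brute force, which is exactly where the $2/3$ and $3/4$ exponents come from.

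The main obstacle, and the technically delicate part, is precisely this balancing and the correct treatment of the high-degree vertices retained in the solution: brute-force guessing of $A^\star \cap V_{\ge D}$ is too expensive when $V_{\ge D}$ is large, so one needs a smarter mechanism — for instance, recursively applying the separator theorem to peel off high-degree vertices in rounds, or arguing that after removing $\Oh(n/D)$ vertices the instance both has bounded degree \emph{and} few surviving high-degree vertices, so that an $\Oh(n/D)$-sized balanced separator can be taken to include all of them. Getting the recursion depth, the accumulated branching factor, and the final treewidth bound to multiply out to $2^{\Ohtilde(n^{2/3})}$ and $2^{\Ohtilde(n^{3/4})}$ respectively is the crux; the remaining ingredients — heredity of $\Dd$ to stay in the class, \cref{lem:sn} to convert separators to treewidth, and \ref{prop:ctw} to finish each leaf of the branching tree — are then immediate.
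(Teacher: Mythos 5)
There is a genuine gap, and it sits at the very first step. Your plan rests on the claim that, because $G[A^\star]$ has $\Oh(n)$ edges, the solution $A^\star$ meets the set $V_{\ge D}$ of vertices of degree at least $D$ \emph{in $G$} in only $\Oh(n/D)$ vertices. That is false: sparsity of $G[A^\star]$ controls degrees \emph{inside the solution}, not degrees in the ambient graph. For instance, if $G$ is a complete bipartite graph and $A^\star$ is one side (an independent set, hence in $\Cc$ for forests), every solution vertex has degree $n/2$ in $G$ but degree $0$ in $G[A^\star]$, so $|A^\star\cap V_{\ge D}|=\Theta(n)$ for every reasonable $D$. This is why all of your subsequent balancing attempts ($n/D$ vs.\ $D$ giving $\sqrt n$, then $n^2/D$ giving inconsistent answers) never reproduce the exponents $2/3$ and $3/4$: the quantity you are trying to guess is simply not small, and the proposal ends by openly deferring the crux (``a recursive/separator-based win-win'') without supplying it. As written, the argument does not go through.

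The missing mechanism is a two-threshold branching in which ``high degree'' is measured in two different graphs. Set $\tau=1/3$ under \ref{prop:separatorDelta} and $\tau=1/4$ under \ref{prop:separatorM}. Let $A'\subseteq A$ be the solution vertices of degree greater than $n^{\tau}$ \emph{in $G[A]$}; by \ref{prop:sparse} this set really does have size $\Oh(n^{1-\tau})$, and it can be guessed exhaustively in $n^{\Oh(n^{1-\tau})}=2^{\Ohtilde(n^{1-\tau})}$ branches. Then, in $G'=G-A'$, repeatedly pick a vertex $v$ of degree at least $n^{2\tau}$ and branch: either $v\notin A$ (delete $v$), or $v\in A$, in which case $v$ has at most $n^{\tau}$ neighbours in $G[A]$ (it is not in $A'$), so guess which at most $n^{\tau}$ of its neighbours may survive and delete the remaining at least $n^{2\tau}-n^{\tau}$ of them. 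The expensive branch has $n^{n^{\tau}}$ options but removes $n^{2\tau}-n^{\tau}$ vertices, so the recurrence yields only $n^{\Oh(n^{1+\tau-2\tau})}=2^{\Ohtilde(n^{1-\tau})}$ branches overall; this amortization is exactly where $n^{2/3}$ and $n^{3/4}$ come from, and it is what your sketch lacks. The residual graph $G''$ has maximum degree below $n^{2\tau}$ and lies in $\Dd$ by \ref{prop:hereditary}, so it (and each of its subgraphs) has balanced separators of size $\Oh(n^{2\tau})$ resp.\ $\Oh(\sqrt{n\cdot n^{2\tau}})$, i.e.\ $\Oh(n^{1-\tau})$ in both cases, whence treewidth $\Oh(n^{1-\tau})$ by \cref{lem:sn}; adding $A'$ back raises the treewidth by at most $|A'|=\Oh(n^{1-\tau})$, and one then runs \ref{prop:ctw} on $G[V(G'')\cup A']$ (the call must include $A'$, since those vertices constrain the solution), giving total time $2^{\Ohtilde(n^{1-\tau})}$ as claimed.
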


\begin{proof}
Let a constant $\tau$ be defined as follows, depending on which of the two conditions is satisfied by $\Dd$:
\[
\tau = \begin{cases}
{1/3} & \text{ if $\Dd$ satisfies \ref{prop:separatorDelta},}\\
{1/4} & \text{ if $\Dd$ satisfies \ref{prop:separatorM}.}
\end{cases} 
\]
We devise a branching algorithm that finds a largest set $A\subseteq V(G)$ such that $G[A]\in \Cc$ in $2^{\Ohtilde(n^{1-\tau})}$ time.
This matches the complexity bounds from the statement of the theorem.

Let $G\in \Dd$ be the input graph and $n$ be the number of its vertices. Consider a fixed solution $A$, that is, a largest set $A\subseteq V(G)$ such that $G[A]\in \Cc$.
Let $A' \subseteq A$ be the set of vertices of degree greater than $n^\tau$ in $G[A]$. By property \ref{prop:sparse}, we have $|A'| = \Oh(n/n^\tau) = \Oh(n^{1-\tau})$.

The algorithm guesses the set $A'$ exhaustively, by trying all subsets of $V(G)$ of the appropriate sizes $\Oh(n^{1-\tau})$, which results in $n^{\Oh(n^{1-\tau})} = 2^{\Ohtilde(n^{1-\tau})}$ branches.
Fix one such branch and assume, for the purpose of further description of the algorithm, that it corresponds to the true set $A'$ (i.e., the one obtained from the fixed solution $A$).
Let $G'=G-A'$.

Suppose that $G'$ contains a vertex $v$ of degree at least $n^{2\tau}$. 
If $v \in A$, then $v$ has degree at most $n^\tau$ in $G[A]$ (since $v \notin A'$). 
The algorithm further guesses that $v\notin A$ and discards $v$ (one branch), or it guesses that $v\in A$ and discards all but at most $n^\tau$ neighbors of $v$ in $G'$ (at most $n^{n^\tau}$ branches).
In the latter case, we do \emph{not} fix the assumption that $v$ or any particular neighbor of $v$ belongs to $A$, so that the vertices that have survived this step can still be discarded in subsequent branching steps.

The step described above is repeated exhaustively. The overall number of branches generated in this way can be bounded as follows, where $k=|V(G')|$:
\begin{align*}
F(k) & \leq  F(k-1) + n^{n^\tau}\cdot F(k-(n^{2\tau} - n^\tau))\\
     & \leq  F(k-2) + n^{n^\tau}\cdot F(k-(n^{2\tau} - n^\tau)) + n^{n^\tau}\cdot F(k-(n^{2\tau} - n^\tau))\\
     & \leq  \ldots \leq F(k-(n^{2\tau} - n^\tau))+(n^{2\tau} - n^\tau) \cdot n^{n^\tau}\cdot F(k-(n^{2\tau} - n^\tau))\\
     & =  (n^{2\tau} - n^\tau+1) \cdot n^{n^\tau} \cdot F(k-(n^{2\tau} - n^\tau))  \\
     & \leq  \left ( (n^{2\tau} - n^\tau+1) \cdot n^{n^\tau} \right ) ^{k/(n^{2\tau} - n^\tau)}\\
     & \leq  \left ( (n^{2\tau} - n^\tau+1) \cdot n^{n^\tau} \right ) ^{n/(n^{2\tau} - n^\tau)} = n^{\Oh(n^{1+\tau - {2\tau}})}=2^{\Ohtilde(n^{1-\tau})}.
\end{align*}
Once the branching step can no longer be applied, we obtain an induced subgraph $G''$ of $G'$ of maximum degree less than $n^{2\tau}$.
In the branch where all the choices have been made correctly (i.e., according to the fixed solution $A$), $G''$ still contains all vertices from $A \setminus A'$.

By property \ref{prop:hereditary}, we have $G'' \in \Dd$. Thus $G''$ satisfies either \ref{prop:separatorDelta} or \ref{prop:separatorM}, which means that $G''$ has a balanced separator of size $\Oh(n^{2/3})$ in the former case or $\Oh(\sqrt{n \cdot n^{1/2}}) = \Oh(n^{3/4})$ in the latter case.
In both cases, the size of the separator is $\Oh(n^{1-\tau})$.
Moreover, by the same argument, balanced separators of that size also exist in every subgraph of $G''$. 
Therefore, by \cref{lem:sn}, we conclude that $G''$ has treewidth $\Oh(n^{1-\tau})$. Since $|A'|\leq \Oh(n^{1-\tau})$, it follows that the graph $G[V(G'') \cup A']$ also has treewidth $\Oh(n^{1-\tau})$.


We know that $G[V(G'') \cup A'] \in \Dd$ and, in the branch where all choices have been made correctly, this graph contains the entire maximum-size solution $A$. 
Now, we apply the procedure assumed in \ref{prop:ctw} to the graph $G[V(G'') \cup A']$ and observe that in the correct branch it finds some maximum-size solution (possibly different from $A$).
Let us point out that in this step it is not sufficient to consider only the graph $G''$, as the vertices from $A'$ introduce some additional constraints on the solution we are looking for.

For the time complexity, the algorithm considers $2^{\Ohtilde(n^{1-\tau})}$ branches and in each of them it executes the procedure assumed in \ref{prop:ctw} in $2^{\Ohtilde(n^{1-\tau})}$ time,
which gives the total running time of $2^{\Ohtilde(n^{1-\tau})}$.
\end{proof}

\begin{remark}\label{rem:generalization}
The condition \ref{prop:sparse} in the statement of \cref{thm:main} can be relaxed to ``every $n$-vertex graph from $\Cc$ has $\Oh(n^{2-\epsilon})$ edges, for some constant $\epsilon > 0$''.
Then, we can follow the same approach with the following modification: we choose $\tau=1-\frac{2}{3}\epsilon$ in case of~\ref{prop:separatorDelta} and $\tau=1-\frac{3}{4}\epsilon$ in case of~\ref{prop:separatorM}, and replace the threshold for branching on high-degree vertices from $n^{2\tau}$ to $n^{2\tau+\epsilon-1}$.
This way, we obtain algorithms with running time $2^{\Ohtilde(n^{1-\eps/3})}$ for property~\ref{prop:separatorDelta} and $2^{\Ohtilde(n^{1-\eps/4})}$ for property~\ref{prop:separatorM}.
This running time is subexponential for every $\eps>0$.

One can also imagine unifying properties \ref{prop:separatorDelta} and  \ref{prop:separatorM} into the existence of a balanced separator of size $\Oh(n^{\alpha}\Delta^\beta)$, for some constants $\alpha,\beta$.
However, then, one needs to be careful when choosing $\tau$ so that it belongs to the interval $[0,1]$. As we did not find concrete examples of interesting graph classes $\Dd$ for which this approach would yield non-trivial results and which would not satisfy either \ref{prop:separatorDelta} or \ref{prop:separatorM}, we refrain from discussing further details here.
\end{remark}


\section{Corollaries}\label{sec:cors}

In this section, we discuss possible classes $\Cc$ and $\Dd$ which satisfy the conditions of \cref{thm:main}.
For some choices of $\Cc$, we obtain well-studied computational problems:
\begin{enumerate}
\item for matchings, we obtain \textsc{Max Induced Matching},
\item for forests, we obtain \MIF, also known as \FVS,
\item for graphs of maximum degree $d$, where $d$ is fixed, we obtain \textsc{Max Induced Degree-$d$ Subgraph},
\item for planar graphs, we obtain \textsc{Max Induced Planar Subgraph}, also known as \textsc{Planarization},
\item for graphs embeddable in $\Sigma$, where the surface $\Sigma$ is fixed, we obtain \textsc{Max Induced $\Sigma$-Embeddable Subgraph},
\item for graphs of degeneracy at most $d$, where $d$ is fixed, we obtain \textsc{Max Induced $d$-Degenerate Subgraph}.
\end{enumerate}
It is clear that all these classes satisfy property~\ref{prop:sparse} of \cref{thm:main}.

Given a graph of treewidth $w$, its tree decomposition of width at most $4w+3$ can be computed in $2^{\Oh(w)}\cdot n^{2}$ time (see e.g.~\cite[Section 7.6]{platypus}). 
Therefore, for the purpose of verifying property~\ref{prop:ctw}, we can assume that a tree decomposition of width $\Oh(w)$ is additionally provided on input.
While $2^{\Ohtilde(w)}\cdot n^{\Oh(1)}$-time algorithms are quite straightforward and well known for the first two problems on the list, this is not necessarily the case for the others.
For \textsc{Max Induced Degree-$d$ Subgraph}, an algorithm with running time $2^{\Oh(w)}\cdot n^{\Oh(1)}$ can be easily derived from the meta-theorem of Pilipczuk~\cite{10.1007/978-3-642-22993-0_47}.
Algorithms for \textsc{Max Induced Planar Subgraph} and, more generally, \textsc{Max Induced $\Sigma$-Embeddable Subgraph}, were provided by Kociumaka and Pilipczuk~\cite{DBLP:journals/corr/KociumakaP17}. 
Finally, we give a suitable algorithm for \textsc{Max Induced $d$-Degenerate Subgraph} in \cref{lem:deg-dp} in \cref{sec:deg-dp}.

It may be tempting to consider, as $\Cc$, the graphs with no even cycle $C_{2k}$ (not necessarily induced), for some fixed integer $k\geq 2$.
This is because such graphs have $\Oh(n^{2-\Omega(1/k)})$ edges~\cite{BONDY197497}, and thus they satisfy the generalization of property \ref{prop:sparse} mentioned in \cref{rem:generalization} for $\epsilon=\Omega(1/k)$.
However, for these classes, property \ref{prop:ctw} turns out to be problematic: 
for any fixed $\ell \geq 5$, there is no algorithm for a minimum set of vertices hitting all (non-induced) copies of $C_\ell$ in a graph with treewidth $w$ with running time $2^{o(w^2)}\cdot n^{\Oh(1)}$ unless the ETH fails~\cite{10.1007/978-3-642-22993-0_47} (this bound appears to be essentially tight, as the problem can be solved in $2^{\Ohtilde(w^2)}\cdot n^{\Oh(1)}$  time~\cite{DBLP:journals/iandc/CyganMPP17}). It is unclear whether the additional assumption that the input graph belongs to some class $\Dd$, considered here, can help.


Now, let us consider classes $\Dd$. Examples of classes satisfying property \ref{prop:separatorDelta} in \cref{thm:main} come from forbidding some induced subgraphs.  Bacsó et al.~\cite{BacsoLMPTL19} proved that $P_t$-free graphs with maximum degree $\Delta$ have treewidth $\Oh(\Delta \cdot t)$. Very recently, Chudnovsky et al.~\cite{ChudnPPT_2019} observed that \emph{long-hole-free graphs}, that is, graphs with no induced cycles of length at least $5$, also have balanced separators of size $\Oh(\Delta)$. 

An example of a class satisfying property \ref{prop:separatorM} is the class of string graphs---intersection graphs of arc-connected subsets of the plane. Lee~\cite{DBLP:conf/innovations/Lee17} showed that they admit balanced separators of size $\Oh(\sqrt{m})$, where $m$ is the number of edges. In fact, he proved a more general result that if ${\cal M}$ is a class of graphs excluding a fixed graph as a minor, then intersection graphs of connected subgraphs of graphs from ${\cal M}$ admit balanced separators of size $\Oh(\sqrt{m})$. 
String graphs are precisely the intersection graphs of connected subgraphs of planar graphs. 

Summing up, we obtain the following. 

\begin{corollary} \label{cor:ptfree}
Each of the following problems can be solved in\/ $2^{\Ohtilde(n^{2/3})}$ time on\/ $P_t$-free graphs (for every fixed\/ $t$) and in long-hole-free graphs, and
in\/ $2^{\Ohtilde(n^{3/4})}$ time on string graphs:
\begin{enumerate}
\item \textsc{Max Induced Matching},
\item \MIF,
\item \textsc{Max Induced Degree-$d$ Subgraph}, for every fixed\/ $d\in \N$,
\item \textsc{Max Induced Planar Subgraph},
\item \textsc{Max Induced $\Sigma$-Embeddable Subgraph}, for every fixed surface\/ $\Sigma$,
\item \textsc{Max Induced $d$-Degenerate Subgraph}, for every fixed\/ $d\in \N$.
\end{enumerate}
\end{corollary}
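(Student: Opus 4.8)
The plan is to verify that each problem on the list satisfies the three general conditions \ref{prop:sparse}, \ref{prop:hereditary}, \ref{prop:ctw} of \cref{thm:main}, and that each of the three target classes satisfies either \ref{prop:separatorDelta} or \ref{prop:separatorM}; the corollary then follows by a direct invocation of the theorem. The bulk of the work is bookkeeping: collecting the relevant facts from the literature. For property \ref{prop:sparse}, I would observe that matchings have $n/2$ edges, forests have at most $n-1$ edges, graphs of maximum degree $d$ have at most $dn/2$ edges, planar graphs have at most $3n-6$ edges, graphs embeddable in a fixed surface $\Sigma$ have $\Oh_\Sigma(n)$ edges by Euler's formula, and $d$-degenerate graphs have at most $dn$ edges; all these classes are clearly hereditary, so \ref{prop:sparse} holds with a constant depending only on the fixed parameter.

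Next I would address property \ref{prop:ctw}, the single-exponential treewidth dynamic programming. Using the fact recalled in the excerpt that a tree decomposition of width $\Oh(w)$ can be computed in $2^{\Oh(w)}n^2$ time, I may assume such a decomposition is given. For \textsc{Max Induced Matching} and \MIF{} the standard dynamic programming over a tree decomposition runs in $2^{\Oh(w)}n^{\Oh(1)}$ time and I would just cite this as folklore. For \textsc{Max Induced Degree-$d$ Subgraph} I would invoke the meta-theorem of Pilipczuk~\cite{10.1007/978-3-642-22993-0_47}, which captures problems expressible in a suitable modal logic and yields a $2^{\Oh(w)}n^{\Oh(1)}$ algorithm; here the property ``every vertex of $G[A]$ has at most $d$ neighbors in $A$'' is easily expressed. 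For \textsc{Max Induced Planar Subgraph} and \textsc{Max Induced $\Sigma$-Embeddable Subgraph} I would cite the algorithms of Kociumaka and Pilipczuk~\cite{DBLP:journals/corr/KociumakaP17}, which give $2^{\Ohtilde(w)}n^{\Oh(1)}$ time (note the polylogarithmic slack, which is exactly why \ref{prop:ctw} is stated with $\Ohtilde$). Finally, for \textsc{Max Induced $d$-Degenerate Subgraph} I would defer to \cref{lem:deg-dp}, the self-contained dynamic programming argument given in \cref{sec:deg-dp}.

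For the classes $\Dd$: I would cite Bacs\'o et al.~\cite{BacsoLMPTL19} for the fact that $P_t$-free graphs with maximum degree $\Delta$ have treewidth $\Oh(\Delta t)$ (hence, by \cref{lem:sn} applied in reverse, balanced separators of size $\Oh(\Delta)$ for fixed $t$), the recent observation of Chudnovsky et al.~\cite{ChudnPPT_2019} that long-hole-free graphs admit balanced separators of size $\Oh(\Delta)$, and Lee~\cite{DBLP:conf/innovations/Lee17} for the fact that string graphs admit balanced separators of size $\Oh(\sqrt m)$; since $m\le n\Delta/2$, this gives balanced separators of size $\Oh(\sqrt{n\Delta})$, which is precisely \ref{prop:separatorM}. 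All three classes are hereditary, so \ref{prop:hereditary} holds. With every hypothesis of \cref{thm:main} checked, the running times $2^{\Ohtilde(n^{2/3})}$ for the first two classes and $2^{\Ohtilde(n^{3/4})}$ for string graphs drop out immediately.

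I do not anticipate a genuine mathematical obstacle; the statement is a corollary in the literal sense. The one point requiring a little care is the exact form in which property \ref{prop:separatorM} is phrased in \cref{thm:main} (a separator of size $\Oh(\sqrt{n\Delta})$ rather than $\Oh(\sqrt m)$), so I would make explicit the inequality $\sqrt m \le \sqrt{n\Delta}$ that bridges Lee's bound to the hypothesis of the theorem. A secondary subtlety is that some of the cited treewidth algorithms (notably for the planar and surface cases) run in $2^{\Ohtilde(w)}$ rather than $2^{\Oh(w)}$ time, but this is harmless because \ref{prop:ctw}, and hence the conclusion of \cref{thm:main}, already tolerates polylogarithmic factors in the exponent.
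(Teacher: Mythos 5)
Your proposal is correct and follows essentially the same route as the paper: verify properties \ref{prop:sparse}--\ref{prop:ctw} for each target class $\Cc$ (citing the Pilipczuk meta-theorem, Kociumaka--Pilipczuk, and \cref{lem:deg-dp} exactly as the paper does), verify \ref{prop:separatorDelta} via Bacs\'o et al.\ and Chudnovsky et al.\ and \ref{prop:separatorM} via Lee's $\Oh(\sqrt m)$ separators for string graphs, and then invoke \cref{thm:main}. Your explicit bridging step $\sqrt m\le\sqrt{n\Delta}$ is a welcome small clarification that the paper leaves implicit, but it does not change the argument.
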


We note that subexponential-time algorithms for \textsc{Max Induced Matching} and \MIF on string graphs were already known~\cite{DBLP:journals/algorithmica/BonnetR19}, even with a better running time than provided above.
As we have argued, in \cref{cor:ptfree}, we can replace string graphs with intersection graphs of connected subgraphs of graphs from ${\cal M}$, where ${\cal M}$ is any class of graphs excluding a fixed graph as a minor; this is because the result of Lee~\cite{DBLP:conf/innovations/Lee17} holds in that generality.


%

\section{\MIF in $H$-free graphs}\label{sec:lbs}

Our original motivation was the \MIF problem.
In the previous section, we discussed a subexponential-time algorithm solving it on $P_t$-free graphs.
We now show that as long as the considered class of inputs $\Dd$ is characterized by a single excluded induced subgraph, that is,
we investigate \MIF on $H$-free graphs for a fixed graph $H$, we cannot hope for more positive results.
Namely, it turns out that if $H$ is not a linear forest (i.e., a collection of vertex-disjoint paths), the problem is unlikely to admit a polynomial-time or even a subexponential-time algorithm on $H$-free graphs. 
Specifically, we obtain the following dichotomy.

\begin{theorem}\label{thm:fvs-neg}
Let\/ $H$ be a fixed graph.
\begin{enumerate}
\item If\/ $H$ is a linear forest, then the \MIF problem can be solved in\/ $2^{\Ohtilde(n^{2/3})}$ time on\/ $H$-free graphs with\/ $n$ vertices. \label{item:algo}
\item Otherwise, on\/ $H$-free graphs, the \MIF problem is NP-complete and cannot be solved in\/ $2^{o(n)}$ time unless the ETH fails. \label{item:hardness}
\end{enumerate}
\end{theorem}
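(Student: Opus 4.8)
The plan is to handle the two parts separately, where part \ref{item:algo} is essentially immediate and part \ref{item:hardness} requires a reduction.

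For part \ref{item:algo}, observe that if $H$ is a linear forest, then in particular $H$ is a subgraph of some path $P_t$, so every $H$-free graph is $P_t$-free for $t = |V(H)|$ (if $H$ has $k$ vertices and is a linear forest, then $H$ is an induced subgraph of $P_{2k}$, say, so $P_{2k}$-free implies $H$-free, hence $H$-free graphs form a subclass of $P_{2k}$-free graphs). We then invoke \cref{cor:ptfree}: the class $\Cc$ of forests satisfies \ref{prop:sparse}, the class $\Dd$ of $P_t$-free graphs satisfies \ref{prop:hereditary} and \ref{prop:separatorDelta} (by the result of Bacsó et al.~\cite{BacsoLMPTL19}), and \MIF admits a single-exponential algorithm parameterized by treewidth, so \ref{prop:ctw} holds as well. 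Thus \cref{thm:main}(a) gives the $2^{\Ohtilde(n^{2/3})}$ running time on $H$-free graphs.

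For part \ref{item:hardness}, suppose $H$ is not a linear forest. Then $H$ either contains a cycle or contains a vertex of degree at least $3$. The plan is to show that in either case, \MIF remains hard on graphs that exclude $H$ as an induced subgraph, by building on the existing ETH-hardness of \FVS on general graphs (equivalently, \MIF): there is no $2^{o(n)}$-time algorithm for \FVS on general graphs under ETH, which follows from the NP-hardness reduction from \textsc{Vertex Cover} on sparse instances composed with the sparsification lemma. The key step is to take an arbitrary instance and transform it into an $H$-free instance while changing the vertex count only linearly and preserving the answer. The standard tool here is \emph{subdividing edges}: subdividing every edge sufficiently many times destroys all short cycles and, more importantly, makes any fixed graph $H$ with a cycle or a branch vertex impossible to appear as an induced subgraph (a long subdivision is locally a collection of paths), while each edge subdivision changes the size of a maximum induced forest in a controlled way. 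One must check: (i) subdividing every edge of $G$ exactly, say, $2$ or $3$ times yields a graph $G'$ that is $H$-free whenever $H$ is not a linear forest — here the case where $H$ is a forest with a degree-$3$ vertex needs a little care, since subdivisions of a star with long legs are still trees with a branch vertex, so one may need a different gadget (e.g., replace vertices by cliques or by specific small gadgets) rather than plain subdivision for the acyclic-but-branching case; (ii) the transformation preserves (or cleanly controls) the size of a maximum induced forest, so that \MIF on $G'$ answers \MIF on $G$; (iii) $|V(G')| = \Oh(|V(G)| + |E(G)|)$, which combined with starting from bounded-degree (hence sparse) instances keeps the blow-up linear, so a $2^{o(n)}$ algorithm on $H$-free graphs would yield a $2^{o(n)}$ algorithm on general graphs, contradicting ETH. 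NP-completeness follows from the same reduction being polynomial-time.

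The main obstacle is point (i) in the hardness direction: a single uniform gadget need not simultaneously kill every possible $H$ that fails to be a linear forest. I expect the argument to split into the case ``$H$ contains a cycle'' (handled by subdividing edges enough to exceed the girth-relevant length of $H$, making $G'$ have large girth and bounded... actually $G'$ will still have high-degree vertices, so one must ensure $H$'s branch vertices also cannot be realized — if $H$ has a cycle this is automatic since $G'$ can be made to have girth larger than $|V(H)|$) and the case ``$H$ is a forest containing $K_{1,3}$'' (where subdivision alone fails, and one instead attaches pendant structures or uses a clique-substitution so that no induced $K_{1,3}$-subdivision of the required shape survives, e.g. making the host graph have the property that the neighborhood of every vertex is ``dense enough'' to prevent three independent pendant paths). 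Getting a clean, uniform reduction covering both cases — ideally citing or adapting a known construction for \FVS or for related hitting problems on $H$-free graphs — is where the real work lies; the vertex-count bookkeeping and the reduction of the solution size are routine once the gadget is fixed.
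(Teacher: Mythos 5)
Your part~\ref{item:algo} argument is exactly the paper's: a linear forest is an induced subgraph of a long enough path, so $H$-free graphs are $P_t$-free and \cref{cor:ptfree} applies. Your part~\ref{item:hardness} plan is also on the paper's track for the case where $H$ contains a cycle: reduce from \MIF on bounded-degree graphs (ETH-hard, linearly many edges) and subdivide every edge a number of times depending on $|V(H)|$ so that the girth of the resulting graph exceeds $|V(H)|$; the correspondence of induced-forest sizes and the linear blow-up are indeed routine. (Note the paper's subdivision lemma, \cref{lem:fvs-subdivide}, actually covers slightly more: any $H$ with a cycle \emph{or} with two branch vertices in one component, via the distance between branch vertices in the subdivided graph.)

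The genuine gap is the case you yourself flag as ``where the real work lies'': $H$ an acyclic graph containing a branch vertex, i.e., containing an induced $K_{1,3}$ (for instance $H=K_{1,3}$ itself or a subdivided star). Your suggestions for this case---clique substitution, pendant structures, making neighborhoods ``dense enough''---are not worked out, and it is far from clear they can be made to preserve the \MIF value while excluding all such $H$; this is not a bookkeeping issue but the missing key idea. The paper's solution (\cref{lem:fvs-clawfree}) is different and clean: work inside \emph{line graphs}, which are claw-free and hence $H$-free for every such $H$, and reduce from \textsc{Hamiltonian Path} on graphs with linearly many edges. An induced forest in $L(G)$ corresponds to a set of edges forming vertex-disjoint paths in $G$ (a degree-$3$ vertex or a cycle in the edge set would create a triangle or a cycle in $L(G)$), so $L(G)$ has an induced forest on $|V(G)|-1$ vertices if and only if $G$ has a Hamiltonian path; since $|V(L(G))|=|E(G)|=\Oh(|V(G)|)$, both NP-hardness and the $2^{o(n)}$ ETH lower bound transfer. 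Without this (or some equally concrete) construction for the claw case, your proof of part~\ref{item:hardness} is incomplete.
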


\cref{thm:fvs-neg} \ref{item:algo} follows from \cref{cor:ptfree}, because every linear forest is an induced subgraph of some path. 
Statement \ref{item:hardness} follows from a combination of arguments already existing in the literature. 
However, since the proof is simple, we include it for the sake of completeness.

We prove \autoref{thm:fvs-neg} \ref{item:hardness} in two steps. First, we consider graphs $H$ that contain a cycle or two branch vertices, that is, vertices of degree at least $3$.
In this case, we can apply the standard argument of subdividing every edge a suitable number of times, cf.~\cite[Theorem~3]{DBLP:journals/tcs/ChiarelliHJMP18}.

\begin{lemma}\label{lem:fvs-subdivide}
Let\/ $H$ be a fixed graph that either contains a cycle or has a connected component with at least two branch vertices.
Then \MIF is NP-complete on\/ $H$-free graphs. Moreover, there is no algorithm solving \MIF in\/ $2^{o(n)}$ time for\/ $n$-vertex\/ $H$-free graphs unless the ETH fails.
\end{lemma}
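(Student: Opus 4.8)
The statement to prove is \cref{lem:fvs-subdivide}: that \MIF is NP-complete and has no $2^{o(n)}$-time algorithm (under ETH) on $H$-free graphs, whenever $H$ contains a cycle or has a component with at least two branch vertices. The plan is to reduce from \MIF (equivalently \FVS) on graphs of bounded degree, which is well known to be NP-complete and, under ETH, not solvable in $2^{o(n)}$ time. Given an instance $G$ of \MIF with $n$ vertices and $\Oh(n)$ edges, the idea is to subdivide every edge of $G$ exactly $k$ times, for a suitably chosen constant $k$ depending only on $H$, obtaining a graph $G'$. Subdividing an edge $k$ times replaces it with a path on $k$ internal vertices, so $G'$ has $n + k\cdot|E(G)| = \Oh(n)$ vertices; hence a $2^{o(n')}$-time algorithm on $G'$ would give a $2^{o(n)}$-time algorithm on $G$, contradicting ETH.

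\textbf{Key steps.} First I would recall the equivalence: \MIF on $G$ asks for a largest induced forest, and \FVS asks for a smallest set whose deletion yields a forest; these are complementary, so hardness transfers. Second, I would establish that subdivision preserves the answer in a controlled way: if $G''$ is obtained from $G$ by subdividing one edge once, then the maximum induced forest of $G''$ has size exactly one more than that of $G$ (the new subdivision vertex can always be added to any induced forest without creating a cycle, and conversely any induced forest of $G''$ projects to one of $G$ of size at least one smaller). Iterating, subdividing every edge $k$ times increases the optimum by exactly $k\cdot|E(G)|$, a quantity known in advance, so the reduction is correct. Third — the crux — I would argue that for an appropriate choice of $k$, the graph $G'$ is $H$-free. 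Here one uses that $H$ contains a cycle or a component with two branch vertices: in either case $H$ contains, as a subgraph, either a cycle or a "theta-like"/"H-like" structure in which two vertices of degree $\ge 3$ are joined by three internally disjoint paths or by a path plus extra pendant structure. Long subdivisions destroy all such configurations as \emph{induced} subgraphs of fixed size: after subdividing every edge $k$ times where $k$ exceeds $|V(H)|$, any connected induced subgraph of $G'$ that is not contained in a single subdivided-edge-path must contain a long induced path between consecutive branch vertices of $G'$ (i.e., original vertices of $G$), forcing it to have more than $|V(H)|$ vertices on an induced path, while $H$ — having a cycle or two branch vertices — cannot be such a long path or be contained in one. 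The standard way to phrase this, as in~\cite[Theorem~3]{DBLP:journals/tcs/ChiarelliHJMP18}, is: choose $k$ so that every subdivided edge contributes a long enough induced path that $G'$ has no induced copy of $H$ unless $H$ is itself a linear forest — which it is not, by assumption.

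\textbf{Main obstacle.} The delicate point is the $H$-freeness argument: one must verify carefully that some finite number of subdivisions per edge suffices to exclude $H$ as an \emph{induced} subgraph, handling uniformly the case where $H$ has a cycle and the case where $H$ has a component with two branch vertices. The clean way is to note that in $G'$ every vertex of degree $\ge 3$ is an original vertex of $G$, and any two such vertices are either non-adjacent in $G'$ or at distance exactly $k+1$ along a subdivided edge; so if $k+1 > |V(H)|$, no induced subgraph of $G'$ on at most $|V(H)|$ vertices can contain two vertices of degree $\ge 3$, nor can it contain an induced cycle (the shortest cycle in $G'$ has length $\ge \mathrm{girth}(G)\cdot(k+1) > |V(H)|$ after subdivision, or $G$ is a forest and the instance is trivial). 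Therefore every connected induced subgraph of $G'$ on $\le |V(H)|$ vertices is a path, so $G'$ is $F$-free for every connected $F$ on $\le |V(H)|$ vertices that is not a path, and in particular $H$-free by the hypothesis on $H$. Once this is in place, the ETH lower bound and NP-completeness follow immediately from the size bound $|V(G')| = \Oh(n)$ and the known hardness of \FVS on bounded-degree (indeed subcubic) graphs.
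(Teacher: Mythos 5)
Your proposal is correct and follows essentially the same route as the paper's proof: reduce from \MIF on bounded-degree graphs (the paper uses maximum degree $6$), subdivide every edge about $|V(H)|$ times so that the optimum shifts by exactly the number of subdivision vertices, and rule out $H$ as an (even non-induced) subgraph via the girth of the subdivided graph and the distance between its branch vertices. One small correction: drop the parenthetical claim that the hardness source can be taken subcubic --- \FVS is polynomial-time solvable on graphs of maximum degree $3$ (Ueno, Kajitani, and Gotoh) --- and reduce from maximum degree $4$, or degree $6$ as the paper does; the rest of your argument is unaffected.
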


\begin{proof}
We reduce from \MIF in graphs with maximum degree $6$; it is known that this problem is NP-complete and has no subexponential-time algorithm assuming ETH~\cite{platypus}. Let $G$ be a graph with $n$ vertices and maximum degree $6$. Let $G^*$ be the graph obtained from $G$ by subdividing every edge $|V(H)|+1$ times. It is straightforward to observe that $G$ has an induced forest on $n-k$ vertices if and only if $G^*$ has an induced forest on $|G^*|-k$ vertices. Moreover, the number of vertices in $G^*$ is linear in $n$.

Finally, we show that $G^*$ is $H$-free. First, observe that if $H$ contains a cycle, then $H$ cannot be a subgraph of $G^*$, as the girth of $G^*$ is greater than $|V(H)|+1$. On the other hand, the distance between any two branch vertices in $G^*$ is at least $|V(H)|+1$, so $G^*$ does not contain $H$ as a subgraph in case $H$ has two branch vertices in the same connected component.
\end{proof}

By \cref{lem:fvs-subdivide}, the only graphs $H$ for which we might hope for a polynomial-time or even a subexponential-time algorithm for \MIF on $H$-free graphs are collections of disjoint subdivided stars. To resolve this case, we will show that the problem remains hard for line graphs. Recall that the line graph $L(G)$ of a graph $G$ is the graph whose vertices are the edges of $G$ and where the adjacency relation corresponds to the relation of having a common endpoint in $G$.

Actually, Chiarelli et al.~\cite{DBLP:journals/tcs/ChiarelliHJMP18} reported that the hardness of \MIF on line graphs was observed by Speckenmeyer in his PhD thesis~\cite{Speckenmeyer}. However, we were unable to find this result there. Therefore, we provide the easy proof, which boils down to essentially the same argument as in~\cite[Theorem 5]{DBLP:journals/tcs/ChiarelliHJMP18}.

\begin{lemma}\label{lem:fvs-clawfree}
\MIF is NP-complete on line graphs. Moreover, there is no algorithm solving \MIF in\/ $2^{o(n)}$ time for\/ $n$-vertex line graphs unless the ETH fails.
\end{lemma}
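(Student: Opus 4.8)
The plan is to reduce from \MIF in graphs of bounded maximum degree, mirroring the standard argument that relates a feedback vertex set in a graph $G$ to a maximum induced matching-like structure in its line graph $L(G)$. The key observation is that induced cycles in $L(G)$ correspond to cycles in $G$ together with certain small configurations (triangles arising from a common vertex of degree $\geq 3$). Concretely, for a graph $G$ with minimum degree large enough that every vertex lies on a cycle, an induced forest in $L(G)$ corresponds to a set of edges of $G$ that, after removing the complement, leaves $G$ without cycles and without any vertex of degree $\geq 3$ among the kept edges. One should therefore first pass to a subcubic instance of \MIF (which is NP-complete and has no $2^{o(n)}$ algorithm under ETH by~\cite{platypus}), so that the triangle-from-a-branch-vertex phenomenon is controlled, and ideally further to an instance where $G$ is cubic or close to it.

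First I would recall the precise correspondence: a set $F \subseteq E(G) = V(L(G))$ induces a forest in $L(G)$ if and only if (i) $F$ contains no edge set forming a cycle of $G$, and (ii) $F$ contains no three edges pairwise sharing endpoints, i.e.\ no vertex of $G$ is incident to three edges of $F$ and no triangle of $G$ has all three edges in $F$. In a triangle-free graph of maximum degree $3$, condition (ii) simplifies: $F$ induces a forest in $L(G)$ iff the subgraph $(V(G), F)$ has maximum degree at most $2$ and is itself acyclic, i.e.\ $(V(G),F)$ is a linear forest. So maximizing $|F|$ with $G[F]$ acyclic in $L(G)$ is equivalent to finding a maximum-size linear forest subgraph of $G$, equivalently a minimum set of edges to delete so that the remainder is a disjoint union of paths. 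Then I would exhibit a reduction from \MIF (vertex version) in subcubic triangle-free graphs to this edge-deletion problem; the cleanest route is to reduce from \textsc{Hamiltonian Path}-type or directly from \MIF via a gadget that forces the deleted edge set to localise around a chosen vertex set, keeping the instance size linear so that the ETH lower bound transfers.

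The hard part will be engineering the reduction so that a \emph{vertex} feedback set in $G$ translates faithfully into an \emph{edge} deletion set yielding a linear forest, without blowing up the instance super-linearly (otherwise the $2^{o(n)}$ lower bound is lost) and without accidentally creating extra short cycles or high-degree vertices in $L(G)$ that change the optimum. The natural fix, following~\cite[Theorem~5]{DBLP:journals/tcs/ChiarelliHJMP18}, is to attach a small fixed gadget to each vertex of $G$ so that "paying" for a vertex $v$ in the feedback set corresponds to deleting exactly one designated edge at $v$, and to verify by a short case analysis that the induced-forest constraint in $L(G')$ is satisfied exactly when the corresponding vertex set hits all cycles of $G$. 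Once the equivalence and the linear bound on $|V(L(G'))|$ are established, NP-completeness is immediate (membership in NP is trivial, line graphs are polynomial-time recognisable), and the ETH lower bound follows from the linear-size reduction together with the known hardness of \MIF on bounded-degree graphs.
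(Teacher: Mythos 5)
Your opening observation is exactly the right one, and it matches the paper's key step: a set $F\subseteq E(G)=V(L(G))$ induces a forest in $L(G)$ if and only if $(V(G),F)$ is a linear forest in $G$ (a vertex of degree $3$ in $(V(G),F)$ yields a triangle in $L(G)[F]$, and a cycle yields a cycle), so \MIF on $L(G)$ is the problem of finding a maximum-size linear forest of edges in $G$. However, from that point on your argument has a genuine gap: the actual reduction is never given. You reduce from the \emph{vertex} version of \MIF on subcubic triangle-free graphs and defer the entire translation of a vertex feedback set into an edge-deletion set to an unspecified ``small fixed gadget'' attached at every vertex, together with an unperformed ``short case analysis.'' That gadget is precisely the nontrivial content of your chosen route --- one must make deleting a single designated edge at $v$ simulate deleting the vertex $v$ while ensuring no spurious optimal solutions delete other edges instead, and while keeping the construction linear so the ETH bound survives --- and nothing in the proposal establishes that such a gadget exists or works. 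As written, the proof cannot be checked or completed without inventing this construction from scratch.

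The paper avoids this difficulty entirely by exploiting the correspondence you already proved: it reduces from \textsc{Hamiltonian Path} (NP-complete and without $2^{o(n)}$-time algorithms under ETH even on graphs with linearly many edges~\cite{platypus}). Since an induced forest in $L(G)$ is a linear forest of edges of $G$, an induced forest on $n-1$ vertices of $L(G)$ is a collection of vertex-disjoint paths with $n-1$ edges in total, which forces a single Hamiltonian path; conversely a Hamiltonian path induces a path in $L(G)$. With $|V(L(G))|=|E(G)|$ linear in $|V(G)|$, both NP-hardness and the ETH lower bound follow immediately, with no gadgets and no case analysis. You mention a ``\textsc{Hamiltonian Path}-type'' reduction in passing; had you pursued it, the budget $n-1$ would have done all the gadget work for you. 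I recommend either carrying out that simpler reduction or, if you want to keep the vertex-to-edge gadget route, actually constructing and verifying the gadget (cf.\ the argument in~\cite[Theorem~5]{DBLP:journals/tcs/ChiarelliHJMP18}); in its current form the proposal is a plan rather than a proof.
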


\begin{proof}
We reduce from the \textsc{Hamiltonian Path} problem, which is NP-complete and has no subexponential-time algorithm, even if the input graph has linearly many edges~\cite{platypus}. Let $G$ be a graph, which is the input instance of \textsc{Hamiltonian Path}.

First, note that any induced forest in $L(G)$
corresponds to a collection of vertex-disjoint paths in $G$. More formally, consider a set $E' \subseteq E(G)$, such that $L(G)[E']$ is a forest. We claim that the subgraph $G'=(V(G),E')$ of $G$ is a collection of vertex-disjoint paths.
Suppose not. This means that $G'$ contains a vertex $v$ of degree at least $3$ or a cycle $C$.
In the former case, the edges incident to $v$ in $G'$ form a clique in $L(G)[E']$. In the latter case, the edges of the cycle $C$ form a cycle in $L(G)[E']$. In either case, we get a contradiction to the assumption that $L(G)[E']$ is a forest.

We claim that $G$ has a Hamiltonian path if and only if $L(G)$ has an induced forest on $n-1$ vertices. Indeed, the $n-1$ edges of a Hamiltonian path in $G$ induce a path (in particular, a forest) in $L(G)$. For the converse, suppose that $L(G)$ has an induced forest on at least $n-1$ vertices. By the observation above, this induced forest corresponds to a collection of vertex-disjoint paths in $G$ with at least $n-1$ edges in total. This is only possible if this collection consists of a single path of length $n-1$, that is, a Hamiltonian path in $G$.

Finally, observe that the number of vertices of $L(G)$ is equal to the number of edges of $G$, which is linear in the number of vertices of $G$.
\end{proof}

Recall that line graphs are claw-free, that is, they contain no induced copy of $K_{1,3}$.
Thus \cref{lem:fvs-clawfree} implies that if $H$ contains any star with at least $3$ leaves, then \MIF remains NP-complete and has no subexponential-time algorithm on $H$-free graphs unless ETH fails. 
\cref{thm:fvs-neg} \ref{item:hardness} follows from combining \cref{lem:fvs-subdivide} and \cref{lem:fvs-clawfree}.

\newcommand{\bag}{\beta}
\newcommand{\cmp}{\alpha}

\section{Largest induced degenerate subgraph in low-treewidth graphs}\label{sec:deg-dp}

This section is devoted to the proof of the following result, which we used in \cref{sec:cors}.

\begin{lemma}\label{lem:deg-dp}
For every fixed\/ $d\in \N$, there is an algorithm for {\sc{Max Induced $d$-Degenerate Subgraph}} with running time\/ $2^{\Oh(w\log w)}\cdot n$, where\/
$w$ is the treewidth of the input graph and\/ $n$ is the number of its vertices.
\end{lemma}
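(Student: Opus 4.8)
The plan is to design a standard dynamic programming algorithm over a tree decomposition, where the state at each bag records enough information to verify that the chosen induced subgraph is $d$-degenerate. Recall that a graph is $d$-degenerate if and only if it admits an \emph{elimination ordering} in which every vertex has at most $d$ neighbors appearing later (equivalently, repeatedly deleting a vertex of degree at most $d$ empties the graph). The key combinatorial idea is that $d$-degeneracy is characterized by the existence of an acyclic orientation of the chosen subgraph in which every vertex has out-degree at most $d$: from an elimination ordering orient each edge towards the later endpoint; conversely, a topological order of such an orientation is a valid elimination ordering. So it suffices to decide whether the vertices of the solution $A$ can be equipped with such an orientation.

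First I would fix a nice tree decomposition of width $\Oh(w)$, which by the remark in \cref{sec:cors} may be assumed to be given (computable in $2^{\Oh(w)}\cdot n^{\Oh(1)}$ time). For a bag $\bag_t$, the DP state will consist of: (i) the subset $S \subseteq \bag_t$ of bag vertices that are selected into the partial solution; (ii) for each selected vertex, a count in $\{0,1,\dots,d\}$ of how many of its already-processed out-neighbors have been used so far (capped at $d$, and with the understanding that more out-neighbors may still appear in not-yet-processed bags); and (iii) a \emph{linear preorder} on $S$ that records the relative order of the selected bag vertices in the putative elimination ordering, which is needed to decide the orientation of edges within the bag and, more importantly, to guarantee global acyclicity via consistency across bags. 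The number of such states per bag is $2^{\Oh(w)} \cdot (d+1)^{\Oh(w)} \cdot w^{\Oh(w)} = 2^{\Oh(w\log w)}$, which matches the target running time once we show each transition (forget, introduce, join) can be computed within the same bound.

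The transitions are routine but require care at the join node. At an \textbf{introduce} node, we guess whether the new vertex $v$ is selected and, if so, where it falls in the preorder relative to the current bag vertices; we then orient each edge from $v$ to an earlier-in-preorder selected neighbor already in the bag and update out-degree counters accordingly, rejecting if any counter would exceed $d$. At a \textbf{forget} node, we drop $v$ from the recorded data, but only after checking that $v$'s out-degree counter is legal; crucially, because all of $v$'s neighbors that lie outside the current bag have already been introduced and either processed or are themselves being forgotten, every edge incident to $v$ in the solution has been oriented by the time $v$ is forgotten, so its final out-degree is known. At a \textbf{join} node we merge two states that agree on $S$ and on the preorder; the out-degree counter of each $v\in S$ in the merged state is obtained by adding the contributions from the two subtrees (subtracting the double-counted edges inside the bag, which are handled in exactly one child by convention) and capping at $d$, rejecting on overflow. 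Acyclicity of the global orientation follows because all within-solution edges are oriented consistently with a single global linear order: the preorders stored in the states, being forced to agree along the decomposition, can be stitched into one total order on all of $A$ (this uses that any two adjacent vertices of $A$ appear together in some bag, so their relative order is pinned down). Standard DP bookkeeping then yields a largest selectable set, and a trivial product over the $\Oh(n)$ nodes of the decomposition gives running time $2^{\Oh(w\log w)}\cdot n$.

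The main obstacle I expect is the join node, specifically ensuring that out-degrees are neither double-counted nor lost when a vertex's neighborhood is split between the two subtrees, and arguing cleanly that the locally stored preorders really do assemble into a single acyclic global orientation rather than merely a locally consistent one. The $w^{\Oh(w)}$ factor from storing preorders is what forces the $\log w$ in the exponent; one could try to replace the full preorder by something coarser, but since two selected bag vertices can be non-adjacent yet still need a consistent relative order transitively through the rest of $A$, it is not obvious that less information suffices, and the stated bound $2^{\Oh(w\log w)}\cdot n$ already accommodates the naive choice.
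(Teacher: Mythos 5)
Your proposal is correct and follows essentially the same route as the paper's proof of \cref{lem:deg-dp}: a dynamic program over a nice tree decomposition whose states consist of the selected subset of the bag, a linear order on it, and per-vertex counters bounded by $d$, giving $2^{\Oh(w\log w)}$ states per node. The only notable difference is bookkeeping: the paper sidesteps the join-node double-counting issue you flag by never charging bag-internal edges to the counters (they are accounted for only when an endpoint is forgotten), which is exactly the ``convention'' your sketch gestures at.
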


\subparagraph*{Preliminaries on tree decompositions.}
First, we introduce some notation and terminology. 
A {\em{tree decomposition}} of a graph $G$ is a tree $T$ together with a mapping $\bag(\cdot)$ that assigns a \emph{bag} $\bag(x)$ to each node $x$ of $T$ in such a way that the following conditions hold:
\begin{enumerate}[label=(T\arabic*),ref=(T\arabic*),leftmargin=*]
\item\label{p:amoeba} for each $u\in V(G)$, the set of nodes $x$ with $u\in \bag(x)$ induces a connected non-empty subtree of $T$; and
\item\label{p:cover}  for each $uv\in E(G)$, there exists a node $x$ such that $\{u,v\}\subseteq \bag(x)$.
\end{enumerate}
The {\em{width}} of a tree decomposition $(T,\bag)$ is $\max_{x\in V(T)} |\bag(x)|-1$, and the {\em{treewidth}} of a graph $G$ is the minimum width of a tree decomposition of $G$.

Henceforth, all tree decompositions will be {\em{rooted}}: the underlying tree $T$ has a prescribed root vertex~$r$. This gives rise a natural ancestor-descendant relation: 
we write $x\preceq y$ if $x$ is an ancestor of $y$ (where possibly $x=y$). Then, for a node $x$ of $T$, we define the {\em{component}} at $x$ as
\[\cmp(x)=\biggl(\bigcup_{y\succeq x} \bag(y)\biggr)\setminus \bag(x).\]
It easily follows from~\ref{p:amoeba} and~\ref{p:cover} that then $N(\cmp(x))\subseteq \bag(x)$ for every node $x$.

A {\em{nice tree decomposition}} is a normalized form of a rooted tree decomposition in which every node is of one of the following four kinds.
\begin{itemize}
\item {\bf{Leaf node}}: a node $x$ with no children and with $\bag(x)=\emptyset$.
\item {\bf{Introduce node}}: a node $x$ with one child $y$ such that $\bag(x)=\bag(y)\cup \{u\}$ for some vertex $u\notin \bag(y)$.
\item {\bf{Forget node}}: a node $x$ with one child $y$ such that $\bag(x)=\bag(y)\setminus \{u\}$ for some vertex $u\in \bag(y)$.
\item {\bf{Join node}}: a node $x$ with two children $y$ and $z$ such that $\bag(x)=\bag(y)=\bag(z)$.
\end{itemize}
Moreover, we require that the root $r$ of the nice tree decomposition satisfies $\bag(r)=\emptyset$.

It is known that any given tree decomposition $(T,\bag)$ of width $k$ of an $n$-vertex graph $G$ can be transformed in $k^{\Oh(1)}\cdot \max(n,|V(T)|)$ time into a nice tree decomposition of $G$ of width at most as large, 
see~\cite[Lemma 7.4]{platypus}. Moreover, given an $n$-vertex graph $G$ of treewidth $w$, a tree decomposition of $G$ of width at most $5w+4$ can be computed in $2^{\Oh(w)}\cdot n$ time~\cite{BodlaenderDDFLP16}, 
and this tree decomposition has at most $n$ nodes. By combining these two results, for the proof of \cref{lem:deg-dp}, we can assume that the input graph $G$ is supplied with a nice tree decomposition $(T,\bag)$ 
of width $k\leq 5w+4$, where $w=\tw(G)$.
From now on, our goal is to design a suitable dynamic programming algorithm working on this decomposition with running time $2^{\Oh(k\log k)}\cdot n=2^{\Oh(w\log w)}\cdot n$.

\subparagraph*{Dynamic programming states.}
The main idea behind our dynamic programming algorithm is to view the notion of degeneracy via vertex orderings, as expressed in the following fact.

\begin{lemma}[Folklore]
A graph\/ $H$ is\/ $d$-degenerate if and only if there is a linear ordering\/ $\sigma$ of vertices of\/ $H$ such that every vertex of\/ $H$ has at most\/ $d$ neighbors that are smaller in\/ $\sigma$.
\end{lemma}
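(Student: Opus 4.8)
The plan is to prove both implications straight from the definition: a graph is $d$-degenerate precisely when every subgraph of it (equivalently, as noted below, every induced subgraph) contains a vertex of degree at most $d$.

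For the forward implication, I would construct $\sigma$ by a greedy peeling argument, assigning labels from the largest down to the smallest. Put $n=|V(H)|$. Since $H$ is a subgraph of itself, it contains a vertex of degree at most $d$; call it $v_n$ and delete it. The graph $H-v_n$ is again a subgraph of $H$, so it too contains a vertex of degree at most $d$; call it $v_{n-1}$ and delete it; continue until all vertices are labelled, obtaining an ordering $v_1\prec v_2\prec\dots\prec v_n$ in which $v_1$ is the smallest. The key point is that at the moment $v_i$ is chosen and deleted, the graph still present is exactly $H[\{v_1,\dots,v_i\}]$, so the neighbours of $v_i$ that are smaller than $v_i$ in $\sigma$ are precisely its neighbours in that graph, and by choice there are at most $d$ of them.

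For the backward implication, I would first observe that it suffices to show that every nonempty \emph{induced} subgraph $H[W]$ of $H$ has a vertex of degree at most $d$: an arbitrary subgraph of $H$ on vertex set $W$ has edge set contained in that of $H[W]$, so every vertex has degree at most its degree in $H[W]$, and hence a vertex of degree at most $d$ in $H[W]$ witnesses the same bound there. Now, given a nonempty $W\subseteq V(H)$, let $v$ be the $\sigma$-largest vertex of $W$. Every neighbour of $v$ inside $H[W]$ is $\sigma$-smaller than $v$, so the degree of $v$ in $H[W]$ is at most the number of $\sigma$-smaller neighbours of $v$ in $H$, which is at most $d$. As $W$ was arbitrary, $H$ is $d$-degenerate.

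The argument is entirely routine; the only things that require a little care are the bookkeeping of the orientation of the order in the forward direction (the first vertex peeled off receives the \emph{largest} label, so that ``smaller'' corresponds to ``deleted later''), and the reduction from arbitrary subgraphs to induced subgraphs in the backward direction.
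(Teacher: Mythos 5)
Your proof is correct. The paper itself gives no proof of this lemma --- it is cited as folklore --- and your argument (peeling off minimum-degree vertices to build the ordering, and taking the $\sigma$-largest vertex of an arbitrary vertex subset for the converse, with the correct reduction from arbitrary to induced subgraphs) is exactly the standard argument the paper implicitly relies on.
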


Hence, the problem considered in \cref{lem:deg-dp} can be restated as follows: find a largest set $A\subseteq V(G)$ that admits a linear ordering $\sigma$ 
in which every vertex of $A$ has at most $d$ neighbors in $G[A]$ that are smaller in $\sigma$. Intuitively, our dynamic programming will therefore keep track of the intersection of the bag with $A$, 
the restriction of $\sigma$ to this intersection; and how many smaller neighbors of each vertex from this intersection have been already forgotten.

We now proceed with formal details. For a node $x$ of $T$, a set $X\subseteq \bag(x)$, 
a linear ordering $\sigma$ of $X$, and a function $f\colon X\to \{0,\ldots,d\}$, we define
$\Phi_x[X,\sigma,f]\in \N$ as follows. The value $\Phi_x[X,\sigma,f]$ is the maximum size of a set $Y\subseteq \cmp(x)$ such that $X\cup Y$ admits a linear ordering 
$\tau$ with the following properties: $\tau$ restricted to $X$ is equal to $\sigma$ and for every $a\in X$, there are at most $f(a)$ vertices $b\in Y$ that are adjacent to $a$ and smaller than $a$ in $\tau$.
Note that other neighbors of $a$ that belong to $X$ are \emph{not} taken into consideration when verifying the quota imposed by $f(a)$.
Note also that such a set $Y$ always exists, as $Y=\emptyset$ satisfies the criteria.

For a fixed node $x$, the total number of triples $(X,\sigma,f)$ as above is at most
\[2^{k+1}\cdot (k+1)!\cdot (d+1)^{k+1}\leq 2^{\Oh(k\log k)}.\]
Hence, we now show how to compute the values $\Phi_x[X,\sigma,f]$ in a bottom-up manner, so that the values for a node $x$ are computed based on the values for the children of $x$ in $2^{\Oh(k\log k)}$ time.
The answer to the problem corresponds to the value $\Phi_r[\emptyset,\emptyset,\emptyset]$, where $r$ is the root of~$T$. 
While $\Phi_r[\emptyset,\emptyset,\emptyset]$ is just the size of a largest feasible solution, an actual solution can be recovered 
from the dynamic programming tables using standard methods within the same complexity: for every computed value $\Phi_x[X,\sigma,f]$, we store the way this value was obtained, 
and then we trace back the solution from $\Phi_r[\emptyset,\emptyset,\emptyset]$ in a top-down manner.

\subparagraph*{Transitions.}
It remains to provide recursive formulas for the values of $\Phi_x[\cdot,\cdot,\cdot]$. We only present the formulas, while the verification of their correctness, which follows easily from the definition of $\Phi_x[\cdot,\cdot,\cdot]$, is left to the reader. As usual, we distinguish cases depending on the type of $x$.
\begin{itemize}
\item {\bf{Leaf node}} $x$. Then we have only one value:
\[\Phi_x[\emptyset,\emptyset,\emptyset]=0.\]
\item {\bf{Introduce node}} $x$ with child $y$ such that $\bag(x)=\bag(y)\cup \{u\}$.
Then
\[\Phi_x[X,\sigma,f]=\begin{cases}\Phi_y[X,\sigma,f] & \textrm{if }u\notin X;\\ \Phi_y[X\setminus \{u\},\sigma|_{X\setminus \{u\}},f|_{X\setminus \{u\}}] & \textrm{if }u\in X.\end{cases}\]
\item {\bf{Forget node}} $x$ with child $y$ such that $\bag(x)=\bag(y)\setminus \{u\}$.
Then we have
\[\Phi_x[X,\sigma,f] = \max\,\left(\, \Phi_y[X,\sigma,f],\ 1 + \max_{(\sigma',f')\in S(X,\sigma,f)} \Phi_y[X\cup \{u\},\sigma',f']\,\right),\]
where $S(X,\sigma,f)$ is the set comprising the pairs $(\sigma',f')$ satisfying the following:
\begin{itemize}
\item $\sigma'$ is a vertex ordering of $X\cup \{u\}$ whose restriction to $X$ is equal to $\sigma$; and
\item $f'\colon X\cup \{u\}\to \{0,\ldots,d\}$ is such that for all $a\in X$ that are adjacent to $u$ and larger than $u$ in $\sigma'$, we have $f'(a)\leq f(a)-1$, and for all other $a\in X$, we have $f'(a)\leq f(a)$.
Moreover, we require that $f'(u)\leq d-\ell$, where $\ell$ is the number of vertices $a\in X$ that are adjacent to $u$ and smaller than $u$ in $\sigma'$.
\end{itemize}
\item {\bf{Join node}} $x$ with children $y$ and $z$. Then
\[\Phi_x[X,\sigma,f] = \max_{f_y+f_z\leq f} \Phi_y[X,\sigma,f_y]+\Phi_z[X,\sigma,f_z],\]
where $f_y+f_z\leq f$ means that $f_y(a)+f_z(a)\leq f(a)$ for each $a\in X$.
\end{itemize}
It is straightforward to see that using the formulas above, each value $\Phi_x[X,\sigma,f]$ can be computed in $2^{\Oh(k\log k)}$ time based on the values computed for the children of $x$.
This completes the proof of \cref{lem:deg-dp}.


\begin{thebibliography}{10}

\bibitem{Alekseev82}
Vladimir~E. Alekseev.
\newblock The effect of local constraints on the complexity of determination of
  the graph independence number.
\newblock {\em Combinatorial-algebraic methods in applied mathematics}, pages
  3--13, 1982.
\newblock (in Russian).

\bibitem{BacsoLMPTL19}
G{\'{a}}bor Bacs{\'{o}}, Daniel Lokshtanov, D{\'{a}}niel Marx, Marcin
  Pilipczuk, Zsolt Tuza, and Erik~Jan van Leeuwen.
\newblock Subexponential-time algorithms for {M}aximum {I}ndependent {S}et in
  ${P}_t$-free and broom-free graphs.
\newblock {\em Algorithmica}, 81(2):421--438, 2019.

\bibitem{BliznetsFPV16}
Ivan Bliznets, Fedor~V. Fomin, Micha\l{} Pilipczuk, and Yngve Villanger.
\newblock Largest chordal and interval subgraphs faster than $2^n$.
\newblock {\em Algorithmica}, 76(2):569--594, 2016.
\newblock \href {https://doi.org/10.1007/s00453-015-0054-2}
  {\path{doi:10.1007/s00453-015-0054-2}}.

\bibitem{BodlaenderDDFLP16}
Hans~L. Bodlaender, P{\aa}l~Gr{\o}n{\aa}s Drange, Markus~S. Dregi, Fedor~V.
  Fomin, Daniel Lokshtanov, and Micha\l{} Pilipczuk.
\newblock A $c^k n$ 5-approximation algorithm for treewidth.
\newblock {\em {SIAM} J. Comput.}, 45(2):317--378, 2016.
\newblock \href {https://doi.org/10.1137/130947374}
  {\path{doi:10.1137/130947374}}.

\bibitem{BONDY197497}
John~Adrian Bondy and Miklós Simonovits.
\newblock Cycles of even length in graphs.
\newblock {\em J. Combin. Theory Ser. B}, 16(2):97--105, 1974.

\bibitem{DBLP:journals/algorithmica/BonnetR19}
{\'{E}}douard Bonnet and Pawe{\l} Rz{\k{a}}{\.{z}}ewski.
\newblock Optimality program in segment and string graphs.
\newblock {\em Algorithmica}, 81(7):3047--3073, 2019.
\newblock \href {https://doi.org/10.1007/s00453-019-00568-7}
  {\path{doi:10.1007/s00453-019-00568-7}}.

\bibitem{DBLP:journals/tcs/ChiarelliHJMP18}
Nina Chiarelli, Tatiana~Romina Hartinger, Matthew Johnson, Martin Milani\v{c},
  and Dani{\"{e}}l Paulusma.
\newblock Minimum connected transversals in graphs: New hardness results and
  tractable cases using the price of connectivity.
\newblock {\em Theor. Comput. Sci.}, 705:75--83, 2018.
\newblock \href {https://doi.org/10.1016/j.tcs.2017.09.033}
  {\path{doi:10.1016/j.tcs.2017.09.033}}.

\bibitem{ChudnPPT_2019}
Maria Chudnovsky, Marcin Pilipczuk, Micha\l{} Pilipczuk, and St{\'{e}}phan
  Thomass{\'{e}}.
\newblock On the {M}aximum {W}eight {I}ndependent {S}et problem in graphs
  without induced cycles of length at least five.
\newblock {\em CoRR}, abs/1903.04761, 2019.
\newblock URL: \url{http://arxiv.org/abs/1903.04761}, \href
  {http://arxiv.org/abs/1903.04761} {\path{arXiv:1903.04761}}.

\bibitem{platypus}
Marek Cygan, Fedor~V. Fomin, \L{}ukasz Kowalik, Daniel Lokshtanov, D{\'{a}}niel
  Marx, Marcin Pilipczuk, Micha\l{} Pilipczuk, and Saket Saurabh.
\newblock {\em Parameterized Algorithms}.
\newblock Springer, 2015.
\newblock \href {https://doi.org/10.1007/978-3-319-21275-3}
  {\path{doi:10.1007/978-3-319-21275-3}}.

\bibitem{DBLP:journals/iandc/CyganMPP17}
Marek Cygan, D{\'{a}}niel Marx, Marcin Pilipczuk, and Michal Pilipczuk.
\newblock Hitting forbidden subgraphs in graphs of bounded treewidth.
\newblock {\em Inf. Comput.}, 256:62--82, 2017.
\newblock \href {https://doi.org/10.1016/j.ic.2017.04.009}
  {\path{doi:10.1016/j.ic.2017.04.009}}.

\bibitem{DVORAK2019137}
Zden\v{e}k Dvo\v{r}\'ak and Sergey Norin.
\newblock Treewidth of graphs with balanced separations.
\newblock {\em J. Combin. Theory Ser. B}, 137:137--144, 2019.
\newblock \href {https://doi.org/10.1016/j.jctb.2018.12.007}
  {\path{doi:10.1016/j.jctb.2018.12.007}}.

\bibitem{FominGLS16}
Fedor~V. Fomin, Serge Gaspers, Daniel Lokshtanov, and Saket Saurabh.
\newblock Exact algorithms via monotone local search.
\newblock In {\em STOC 2016}, pages 764--775. {ACM}, 2016.

\bibitem{FominTV11}
Fedor~V. Fomin, Ioan Todinca, and Yngve Villanger.
\newblock Exact algorithm for the maximum induced planar subgraph problem.
\newblock In {\em {ESA} 2011}, volume 6942 of {\em LNCS}, pages 287--298.
  Springer, 2011.

\bibitem{FominTV15}
Fedor~V. Fomin, Ioan Todinca, and Yngve Villanger.
\newblock Large induced subgraphs via triangulations and {CMSO}.
\newblock {\em {SIAM} J. Comput.}, 44(1):54--87, 2015.

\bibitem{GrzesikKPP19}
Andrzej Grzesik, Tereza Klimo\v{s}ov\'a, Marcin Pilipczuk, and Micha\l{}
  Pilipczuk.
\newblock Polynomial-time algorithm for {M}aximum {W}eight {I}ndependent {S}et
  on ${P}_6$-free graphs.
\newblock In {\em {SODA} 2019}, pages 1257--1271. {SIAM}, 2019.

\bibitem{DBLP:journals/corr/KociumakaP17}
Tomasz Kociumaka and Marcin Pilipczuk.
\newblock Deleting vertices to graphs of bounded genus.
\newblock {\em CoRR}, abs/1706.04065, 2017.
\newblock URL: \url{http://arxiv.org/abs/1706.04065}, \href
  {http://arxiv.org/abs/1706.04065} {\path{arXiv:1706.04065}}.

\bibitem{Komusiewicz18}
Christian Komusiewicz.
\newblock Tight running time lower bounds for vertex deletion problems.
\newblock {\em ACM Trans. on Comput. Theory (TOCT)}, 10(2):6:1--6:18, 2018.

\bibitem{DBLP:conf/innovations/Lee17}
James~R. Lee.
\newblock Separators in region intersection graphs.
\newblock In {\em ITCS 2017}, volume~67 of {\em LIPIcs}, pages 1:1--1:8.
  Schloss Dagstuhl---Leibniz-Zentrum f\"ur Informatik, 2017.

\bibitem{LewisY80}
John~M. Lewis and Mihalis Yannakakis.
\newblock The node-deletion problem for hereditary properties is {NP}-complete.
\newblock {\em J. Comput. Syst. Sci.}, 20(2):219--230, 1980.
\newblock \href {https://doi.org/10.1016/0022-0000(80)90060-4}
  {\path{doi:10.1016/0022-0000(80)90060-4}}.

\bibitem{LozinM08}
Vadim~V. Lozin and Martin Milani\v{c}.
\newblock A polynomial algorithm to find an independent set of maximum weight
  in a fork-free graph.
\newblock {\em J. Discrete Algorithms}, 6(4):595--604, 2008.

\bibitem{PilipczukP12}
Marcin Pilipczuk and Micha\l{} Pilipczuk.
\newblock Finding a maximum induced degenerate subgraph faster than $2^n$.
\newblock In {\em {IPEC} 2012}, volume 7535 of {\em LNCS}, pages 3--12.
  Springer, 2012.

\bibitem{10.1007/978-3-642-22993-0_47}
Micha\l{} Pilipczuk.
\newblock Problems parameterized by treewidth tractable in single exponential
  time: {A} logical approach.
\newblock In {\em {MFCS} 2011}, volume 6907, pages 520--531. Springer, 2011.

\bibitem{Speckenmeyer}
Ewald Speckenmeyer.
\newblock {\em Untersuchungen zum Feedback Vertex Set Problem in ungerichteten
  Graphen}.
\newblock PhD thesis, Universit\"at Paderborn, 1983.
\newblock In German.

\end{thebibliography}

\end{document}